
\documentclass[times,11pt]{ettauth}

\usepackage{moreverb}

\usepackage[colorlinks,bookmarksopen,bookmarksnumbered,citecolor=red,urlcolor=red]{hyperref}

\newcommand\BibTeX{{\rmfamily B\kern-.05em \textsc{i\kern-.025em b}\kern-.08em
T\kern-.1667em\lower.7ex\hbox{E}\kern-.125emX}}

\usepackage[toc,page]{appendix}
\newtheorem{thm}{Theorem}
\newtheorem{cor}{Corollary}

\newtheorem{remark}{Remark}
\usepackage{color,soul}
\begin{document}

\title{Employing Antenna Selection to Improve Energy-Efficiency in Massive MIMO Systems}

\author{Masoud Arash\corrauth, Ehsan Yazdian$^*$, Mohammad Sadegh Fazel$^*$, Glauber Brante$^\dagger$, Muhammad Imran$^\ddagger$}

\address{$^*$ Isfahan University of Technology (IUT), Isfahan, Iran \\
$^\dagger$ Federal University of Paraná, Curitiba, Brazil\\
$^\ddagger$ University of Glasgow, Glasgow, UK}

\corraddr{Masoud Arash, Isfahan University of Technology (IUT), Isfahan $8415683111$, Iran. \\ E-mail:M.arash@ec.iut.ac.ir}

\begin{abstract}
Massive MIMO systems promise high data rates by employing large number of antennas, which also increases the power usage of the system as a consequence. This creates an optimization problem which specifies how many antennas the system should employ in order to operate with maximal energy efficiency. Our main goal is to consider a base station with a fixed number of antennas, such that the system can operate with a smaller subset of antennas according to the number of active user terminals, which may vary over time. Thus, in this paper we propose an antenna selection algorithm which selects the best antennas according to the better channel conditions with respect to the users, aiming at improving the overall energy efficiency. Then, due to the complexity of the mathematical formulation, a tight approximation for the consumed power is presented, using the Wishart theorem, and it is used to find a deterministic formulation for the energy efficiency. Simulation results show that the approximation is quite tight and that there is significant improvement in terms of energy efficiency when antenna selection is employed.
\end{abstract}


\maketitle

{}

\section{Introduction}
Massive MIMO systems, which employ a very large number of antennas, is one of the promising technologies for fifth-generation (5G) of wireless communication systems~\cite{hoydis2013massive}. Such transmission techniques are capable of focusing energy into very small regions of space, bringing huge improvements in terms of throughput and energy efficiency (EE). Moreover, the large number of antennas may also help the next generation of wireless systems to manage the growing number of user terminal (UTs), with possible high data rates demands, as is the case for Internet of Things deployments\cite{hoydis2013massive,andrews2014will}. In addition, massive MIMO may also simplify other parts of a communication system, such as the resource allocation and multiple access layer~\cite{larsson2014massive}. 

On the other hand, worries about global warming and greenhouse gas emissions result in limitations in the total power consumption of the communication systems~\cite{imran2011energy}. This brings an interesting trade-off between the system performance in terms of data rate serving multiple UTs due to the increased number of antennas~\cite{hoydis2013massive}, and the increased radiated power and power consumption~\cite{li2014energy}. Such a trade-off reflects directly into the EE, which is a parameter that gives a general view of the amount of optimality of a communication system, by encompassing the efficiency of the consumed power relative to the offered data rate~\cite{ngo2013energy}. In some works, due to mathematical complexity of EE, an approximation of it is used. For example, in~\cite{zhao2014performance} an approximation for EE has been optimized with respect to number of UTs.

In order to reduce the transmit power, some works in the literature proposed power control strategies. For instance, in~\cite{sifaou2016polynomial} the authors presented a simplified approach to reduce the complexity of deploying optimal linear precoders, with the goal of ensuring targeted UTs' data rates. In a multi cell massive MIMO scenario, \cite{van2016joint} optimized power allocation constrained with a total transmit power budget. In addition, the joint allocation of power and pilot symbols have also been considered, trying to mitigate pilot contamination in massive MIMO in~\cite{chien17pilot}, or by using the spectral efficiency as performance metric and setting a total energy budget per coherence interval in~\cite{cheng17pilot}. Also, in~\cite{guo2015energy} the authors proposed a joint pilot and data power control scheme to optimize EE subject to each UT's signal-to-interference plus noise ratio (SINR) and power constraints.

Moreover, an important factor that cannot be neglected in the analysis is the power used by the RF chains and signal processing of the transceivers, especially when the number of antennas increases. Neglecting the power consumption of RF chains causes a monotonic increase of the EE relative to the number of antennas. However, as highlighted by~\cite{mukherjee2014energy, bjornson2014optimal}, EE will not necessarily increase by employing more antennas in the case of massive MIMO systems, which results in an EE optimization problem with respect to the number of antennas and many other system parameters such as the number of UTs, the desired data rate, or other analog devices and residually lossy factors~\cite{ha2013}. Then, trying to tackle this problem, the authors in~\cite{bjornson2014optimal} have presented a comprehensive model for the power consumption of a communication system that can be used to accurately analyze the EE of massive MIMO systems.

Furthermore, the complexity of the MIMO transmission schemes may also be an important limitation in some scenarios. Therefore, antenna selection has been adopted in conventional MIMO systems~\cite{ghrayeb2006survey}, whose idea is to employ only a subset of all available transmit and receive antennas \cite{ghrayeb2006survey} in order to reduce the required number of RF chains, and consequently the energy consumption, as well as the system complexity. Moreover, depending on the optimization goal, the subset of antennas may be formed using different criteria, such as EE, throughput, complexity or \emph{etc}. Recently, the idea of antenna selection has also been analyzed in the massive MIMO context. For instance, the distribution of the mutual information between transmit and/or receive antennas has been derived for a point-to-point massive MIMO communication system employing transmit antenna selection in~\cite{hesami2011limiting, li2014energy}. However, there is a lack of comprehensive analysis for the effects of antenna selection in the Multi-User (MU) case, which has the greatest potential to be employed in the next generation of wireless systems~\cite{larsson2013massive,andrews2014will}. As it has been depicted by~\cite{li2014energy}, the big challenge lies in finding an exact relation for the mutual information with antenna selection in the MU case. 

In addition, with different approaches to improve the EE, the authors in~\cite{fang2015raise} proposed an algorithm to select a subset of antennas that results in the maximum determinant of the channel matrix. Selecting this submatrix has been shown to be the optimum choice from the capacity point of view in MIMO systems~\cite{molisch2005capacity}, however, the complexity of this method is still very high in massive MIMO systems due to the large number of antennas. In \cite{bai2015energy}, a capacity maximizing algorithm is used to improve EE for two scenarios: fixed or variable power consumption. However, in first scenario complexity of the algorithm is very high in massive MIMO systems and becomes even more when power consumption is assumed to be variable with data rate. Moreover, a limited number of RF chains (much less than the total number of available antennas) has been considered in~\cite{molisch2005capacity,li2014energy,dong2011adaptive}. Then, the idea is to cycle the RF chains among the whole set of antennas, which is suitable for situations where channel is slow changing enough. The most important limitation of this approach in a massive MIMO framework is that the diversity is limited by the number of RF chains, and not by the number of available antennas. Note that, in this case, number of selected antennas must always be less than the number of available RF chains. Also, another drawback comes in obtaining channel state information (CSI) of all antennas~\cite{ngo2013energy}, since the cycling of the RF-chains requires more time (or less pilot bits) to achieve full CSI, implying in imperfect or outdated estimation. Therefore, it may be more beneficial to employ the same number of RF-chains as the number of antennas, and then turn off some RF-chains/antennas whenever possible, to operate in a more energy efficient mode.

In this paper, we present an algorithm for antenna selection for MU massive MIMO systems, which is cost-effective related to the system dimensions. In order to reduce the complexity of mathematically analysis of EE, we first propose a selection algorithm based on the norm of the channel coefficients, which is also able to provide a good performance in terms of EE. Then, based on the characteristics of the proposed algorithm, we approximate the EE formulation using ordered statistics, which is shown to be very tight with the optimal case. Our results show a significant improvement in terms of EE when using only a subset of antennas. In general, the optimal number of antennas is always slightly higher than the number of UTs, but the relation depends on the required data rate. If the required data rate is too high (in the order of $100$~Mbps in our examples), then the algorithm opts to use the complete set of antennas. Moreover, when both the number of employed antennas and the average data rate to the users are optimized, up to $110\%$ EE improvement is obtained based on our system assumptions.

The rest of the paper is organized as follows: In section~\ref{System Model} system model is presented. Power consumption model and EE is formulated in Section~\ref{Energy Efficiency}. In Section~\ref{Antenna Selection} our selection algorithm is proposed, with the calculation of the approximated EE given in Section~\ref{Estimation}. Moreover, in Section~\ref{Numerical Results} numerical results will qualify our analysis and show the improvement of EE, achieved due to antenna selection, while the conclusions of this work are given in Section~\ref{Conclusion}.

\emph{Notation}: Boldface (lower case) is used for column vectors, $\boldsymbol{x}$, and (upper case) for matrices, $\boldsymbol{X}$. $\boldsymbol{X}^H$ and $\boldsymbol{X}^T$ illustrates conjugate transpose and transpose of $\boldsymbol{X}$, respectively. $\mathbb{E}\{.\}$ denotes expectation, $\mid.\mid$ stands for absolute value of a given scalar variable, $\parallel.\parallel$ is Euclidean norm, $tr$ is trace operator, $(x)^+=\max(x,0)$ and $\boldsymbol{X}_{(k,k)}$ represents $(k,k)^\text{th}$ entry of $\boldsymbol{X}$. Also, $\boldsymbol{I}_K$ is $K\times K$ identity matrix.

\section{System Model} \label{System Model}
We consider both uplink and downlink of a cellular system, with a single cell consisting of $K$ single antenna UTs and a massive MIMO base station (BS) at the center of the cell. Moreover, we assume that the BS selects $F$ out of $M$ available antennas to serve the UTs. Operational mode is assumed to be Time-Division-Duplex (TDD), so that the CSI acquired at the BS can be used in both uplink and downlink transmissions, due to channel reciprocity. 

Then, at the BS in the uplink, the received signal vector $\boldsymbol{y}^{(ul)} \in \mathbb{C}^{F\times 1}$ can be expressed as
\begin{equation}
\boldsymbol{y}^{(ul)}=\tilde{\boldsymbol{G}}\boldsymbol{Q}^{(ul)}\boldsymbol{s}+\boldsymbol{n}^{(ul)}, \label{recived signal}
\end{equation}
where $\tilde{\boldsymbol{G}}$ is the $F\times K$ channel matrix between $F$ selected antennas in BS and $K$ UTs, $\boldsymbol{Q}^{(ul)}$ is a $K\times K$ diagonal matrix, whose $(k,k)^\text{th}$ element is the square root of the $k^\text{th}$ UT's allocated power, $\boldsymbol{s}\in\mathbb{C}^{K\times 1}$ is the vector of transmitted symbols, with $\mathbb{E}\{\boldsymbol{s}\}=\boldsymbol{0}$ and $\mathbb{E}\{\boldsymbol{s}\boldsymbol{s}^H\}=\boldsymbol{I}_K$, and $\boldsymbol{n}^{(ul)}\sim \mathcal{CN}(0,\sigma_n^2\boldsymbol{I}_F)$ is additive white Gaussian noise. $\boldsymbol{y}^{(ul)}$ is then multiplied by combining matrix, $\boldsymbol{U}$.

In the downlink, by considering that the BS employs a precoding matrix $\boldsymbol{V}^T$ in order to simplify detection of the transmitted symbols at UTs, the signal received by the UTs can be written as
\begin{equation}
\boldsymbol{y}^{(dl)}=\tilde{\boldsymbol{D}}\boldsymbol{V}^T\boldsymbol{Q}^{(dl)}\boldsymbol{s}+\boldsymbol{n}^{(dl)}, \label{yDL}
\end{equation}
where $\tilde{\boldsymbol{D}}=\tilde{\boldsymbol{G}}^T$ is the downlink channel matrix, $\boldsymbol{Q}^{(dl)}$ is the downlink power allocation matrix and $\boldsymbol{n}^{(dl)}\sim \mathcal{CN}(0,\sigma_n^2\boldsymbol{I}_K)$ is the additive white Gaussian noise at the UTs.
Then, due to the dimensions of the system, it is more practical to use linear combiners such as Matched-Filter (MF), Minimum Mean-Squared Error (MMSE) or Zero-Forcing (ZF) due to their lower complexity~\cite{hoydis2013massive} compared to nonlinear methods such as DPC~\cite{jindal2005dirty}. Throughout this paper we assume a ZF processing at the BS, which allows us to achieve tractable mathematical expressions with performance close to MMSE~\cite{bjornson2014optimal} as it is shown in the Numerical Result section. Moreover, similar to~\cite{bjornson2014optimal}, we employ same precoder and combiner operators to reduce the computational complexity of the system, therefore $\boldsymbol{U}=\boldsymbol{V}$. In this case, the matrix $\boldsymbol{V}$ can be expressed as 
\begin{equation}
\boldsymbol{V}=(\tilde{\boldsymbol{G}}^H\tilde{\boldsymbol{G}})^{-1}\tilde{\boldsymbol{G}}^H.
\end{equation}
Consider $\boldsymbol{G}$ as a $M\times K$ matrix which models independent and identically distributed (i.i.d.) small-scale and large-scale fading channel, whose elements represent the channel between each of the $M$ antennas of the BS and the each of $K$ UTs, so that we express it as
\begin{equation}
\boldsymbol{G}=\boldsymbol{H}\boldsymbol{B}^\frac{1}{2}, \label{GHB}
\end{equation}
where
\begin{equation} \label{eq:H}
\boldsymbol{H}(m,k)\sim \mathcal{CN}(0,2\sigma^2),
\end{equation}
models the small-scale fading coefficient between the $k^\text{th}$ UT, $k \in [1,K]$, and the $m^\text{th}$ antenna at the BS, $m \in [1,M]$. Let us remark that we assume perfect CSI availability in the BS side. Also, the variance of each real and imaginary part of $\boldsymbol{H}(m,k)$ is equal to $\sigma^2$. $\boldsymbol{B}$ is a diagonal $K\times K$ matrix, whose $(k,k)^\text{th}$ element represents the $k^\text{th}$ UT's path-loss coefficient, $r_k$. Since the BS antennas are usually placed in much smaller dimensions relative to the distance between UTs and the BS, it is sensible to assume that $r_k$ is the same for all BS antennas. Note that the matrix $\tilde{\boldsymbol{G}}$ in Eq.~\eqref{recived signal} is the channel matrix using only the $F$ selected antennas ($F\leq M$), thus it is obtained using $F$ appropriately selected rows from $\boldsymbol{G}$, as will be detailed in Section~\ref{Antenna Selection}.

\section{Energy Efficiency} \label{Energy Efficiency}
Assume $R_k$ (in bit/channel use) as the average achievable information rate of the $k^\text{th}$ UT and $P_\text{tot}$ as the total power consumption in transceivers to achieve sum rate of all UTs. Then, according to~\cite{mukherjee2014energy}, the average EE is defined as
\begin{equation}
\eta_\text{E}=\frac{\sum_{k=1}^{K}R_k}{P_\text{tot}}. \label{EE def}
\end{equation}
In this paper we assume that all UTs achieve an equal gross rate, $R_k=\bar{R}, k=1, \cdots , K$. To do so, we employ the power allocation scheme presented in~\cite{wiesel2006linear}, which guarantees a constant rate to the UTs, and which provides closed-form results for the consumed energy. Therefore, despite different channel conditions, all UTs will serve with the same rate $\bar{R}$ and therefore the power consumption of each UT will be a function of $\bar{R}$ and its channel condition. Following~\cite{wiesel2006linear} for instantaneous uplink power allocation vector, $\boldsymbol{p}^{(ul)}$, we have 
\begin{equation}
\boldsymbol{p}^{(ul)}=\sigma_n^2 (\boldsymbol{A}^{(ul)})^{-1}\boldsymbol{1}_K,  \label{Aul}
\end{equation}
in which $\boldsymbol{A}^{(ul)}$ is defined as
\begin{equation} \label{Ak,k}
\boldsymbol{A}^{(ul)}_{(k,l)} =
\begin{cases}
\frac{1}{\parallel\boldsymbol{v}_k\parallel^2 (2^{\frac{\bar{R}}{BW}}-1)} &  \hspace{5mm} k=l,\\[1em]
0 &  \hspace{5mm} k\ne l,
\end{cases}
\end{equation}
where $\boldsymbol{v}_k$ is the $k^\text{th}$ column of matrix $\boldsymbol{V}^T$, $BW$ is the transmission bandwidth and $\boldsymbol{1}_K$ is $K\times 1$ vector whose all elements are equal to one. Note that $\boldsymbol{p}^{(ul)}(k)=(\boldsymbol{Q}^{(ul)}_{(k,k)})^2$ and $\boldsymbol{A}^{(ul)}$ is a diagonal matrix. Moreover, due to using the ZF scheme in both uplink and downlink, we also have that $\boldsymbol{A}^{(dl)}=(\boldsymbol{A}^{(ul)})^T$ and, as a result, the instantaneous power allocation at the downlink is the same as at the uplink, \emph{i.e.}, $\boldsymbol{p}^{(dl)} = \boldsymbol{p}^{(ul)}$ and $\boldsymbol{p}^{(dl)}(k)=(\boldsymbol{Q}^{(dl)}_{(k,k)})^2$~\cite{wiesel2006linear}.

It is worth noting that by employing this power allocation scheme one can specify an equal rate for all users, which can be constant or a function of other system requirements. For example, the authors in~\cite{bjornson2014optimal} considered a variable $\bar{R}$, which is a logarithmic function of number of BS antennas and number of UTs. Using this assumption the data rate of all UTs will be affected when a new UT is added, or disconnected from the network. Thus, in this paper we assume that all users have an arbitrary and equal rate $\bar{R}$, while we maintain our formulation for the general case.

We formulate the total power consumption by dividing it into three distinct parts: \emph{i.)} the emitted power, $P_\text{e}$, which depends on the coefficients of transmitted signal and is computed as the absolute transmitted power; \emph{ii.)} the power that the BS and the UTs consume to generate and process the transmitted signals, $P_\text{process}$, which is a function of number of UTs, number of operational antennas and processing algorithms for precoding and combining; and \emph{iii.)} $P_\text{fix}$, which is the constant power consumed by cooling systems, backhaul infrastructure, \emph{etc}. 

First, the average power consumption due to emitted power in the downlink can be calculated as \cite{bjornson2014optimal}
\begin{equation}
P_\text{e}^{(dl)}=BW\sigma_n^2 \, \mathbb{E}\{\boldsymbol{1}_K^T (\boldsymbol{A}^{(dl)})^{-1}\boldsymbol{1}_K\}, \label{Peup}
\end{equation}
where expectation is over both small-scale and large-scale fadings. Moreover, since the power allocation matrix in the uplink is the same as in the downlink in the case of the ZF scheme, we can conclude that $P_\text{e}^{(ul)} = P_\text{e}^{(dl)}$. Then, by assuming half of the coherence block for each uplink and downlink phases, the emitted power will be
\begin{equation}
P_\text{e}=\frac{1}{2}P_\text{e}^{(ul)}+\frac{1}{2}P_\text{e}^{(dl)}=P_\text{e}^{(dl)}.
\end{equation}
Notice that $P_\text{e}$ can be expressed analytically as a function of $K$ and $M$ by using the Wishart theorem as in~\cite{bjornson2014optimal}. However, as pointed out by~\cite{li2014energy}, it is too hard to find an exact expression for~\eqref{Peup} when antenna selection is employed. Nevertheless, in Section~\ref{Estimation} we derive an approximation for $P_\text{e}$ to be used in the case of antenna selection. 

Next, the power used to process the transmitted signals, $P_\text{process}$, can be decomposed into four main components as
\begin{equation}
\label{Pprocess}
P_\text{process}= P_\text{BB} + P_\text{RF} + P_\text{CSI} + P_\text{LP},
\end{equation}
where $P_\text{BB}$ is the power consumption due to baseband operations for coding and decoding, $P_\text{RF}$ is the RF circuitry power consumption, $P_\text{CSI}$ is used for channel estimation, and $P_\text{LP}$ encompasses linear precoding operations. Then, building up upon the model presented in~\cite{bjornson2014optimal}, we have that
\begin{equation}
P_\text{BB}=(P_\text{cod}+P_\text{dec})K, \label{Pcoding}
\end{equation}
\begin{equation}
P_\text{RF}=\frac{1}{2}\underbrace{(FP_\text{tx}+KP_\text{rx})}_\text{Downlink}+\frac{1}{2}\underbrace{(KP_\text{tx}+FP_\text{rx})}_\text{Uplink}, \label{Ptr}
\end{equation}
\begin{equation}
P_\text{CSI}=M\frac{K}{LT}, \label{Pest}
\end{equation}
\begin{equation}
P_\text{LP}=\frac{9K^2F+6KF+2K^3}{3LT}+(1-\frac{K}{T})\frac{FK}{L}, \label{Ppre}
\end{equation}
where $P_\text{cod}$ and $P_\text{dec}$ are constants, respectively denoting the power required for coding or decoding a symbol, while $P_\text{tx}$ and $P_\text{rx}$ are the constant powers used at transmitter and receiver RF-chains, respectively. Notice that we assume equal parts for downlink and uplink in each coherence block, so that a factor of $\frac{1}{2}$ appears for each term in~\eqref{Ptr}. Moreover, $L$ is the computational efficiency of the operations per Joule, and $T$ is the coherence time of channel. 

Rewriting \eqref{Pprocess} in terms of the dependence on $F$ or $K$ we have,
\begin{equation}
P_\text{process}=\sum_{i=1}^3 C_{i,0}K^i + \sum_{i=0}^2 C_{i,1}K^iF, \label{Psystem}
\end{equation}
in which $C_{1,0}=P_\text{cod}+P_\text{dec}+P_\text{rx}+\frac{M}{LT}$, $C_{2,0}=0$, $C_{3,0}=\frac{2}{3LT}$, $C_{0,1}=P_\text{tx}$, $C_{1,1}=\frac{3}{LT}+\frac{1}{L}$ and $C_{2,1}=\frac{2}{LT}$.

Then, to sum up, the total power consumption can be written as
\begin{equation}
P_\text{tot}=P_\text{e}+P_\text{process} + P_\text{fix}. \label{Ptot}
\end{equation}

Finally, assuming that all UTs are served with constant rate, $\bar{R}$, the EE can be written as
\begin{equation}
\eta_\text{E} =\frac{K\bar{R}}{P_\text{e}+\sum_{i=0}^3 C_{i,0}K^i + \sum_{i=0}^2 C_{i,1}K^iF+P_\text{fix}}, \label{EE}
\end{equation}
where notice that $P_\text{e}$, as defined in~\eqref{Peup}, is the sole term without a closed-form expression, which prevents us to find a deterministic form for the EE. Then, in the following we propose an algorithm for antenna selection and use its characteristics to provide a tight approximated formulation for $P_e$.

\section{Antenna Selection Algorithm} \label{Antenna Selection}
From both diversity order and multiplexing gain point of views, the system always benefits from increasing the number of antennas. However, a few trade-offs are observed in terms of the EE of massive MIMO systems. For instance, it has been shown in~\cite{bjornson2014optimal,mukherjee2014energy} that the EE tends to zero when the number of UTs is equal to the number of antennas ($K = M$) using the ZF precoder, or when $M \to \infty$. Thus, it can be concluded that the optimum EE may be achieved for some $F$ in $K<F\le M<\infty$, so that the system may employ an optimum number $F^\star$ out of the pool of $M$ available antennas. In the literature, some works such as~\cite{bjornson2014optimal} have optimized the EE with respect to $M$. However, in an operating massive MIMO system the number of antennas, $M$, is fixed. Therefore, to improve EE we propose to select a subset of antennas, still taking advantage of the massive MIMO transmission scheme, but consuming less energy. Then, since the mathematical analysis in antenna selection case is usually hard to be obtained~\cite{li2014energy}, our main goal is to provide a low-complex antenna selection algorithm, which also allows us to provide a closed-form approximation to the EE formulation.

In order to select the best set of antennas, several criteria have been proposed for conventional MIMO systems. For instance, the channel strength criterion results in maximum SNR; however, maximizing the SNR does not necessarily maximize the EE of the system~\cite{ghrayeb2006survey}. Another criterion that can be employed is to maximize the Frobenius norm, which has been shown in~\cite{gore2002mimo} to maximize SNR at the same time that it minimizes the instantaneous probability of error. Therefore, our approach is to select a subset of best antennas in the sense of maximizing the average absolute value of the channel coefficients. Notice that this method it not necessarily optimal in terms of EE, but its computational complexity is linear with number of UTs and BS antennas. 

The antenna selection algorithm is presented in Table~\ref{Table algorithm}. First we employ the absolute value operator to form the vector $\boldsymbol{a}$, whose $m^\text{th}$ element contains the sum of the absolute values of the channel coefficients between all UTs and the $m^\text{th}$ BS antenna. Note that we employ $\boldsymbol{H}$, as defined in~\eqref{eq:H}, rather than $\boldsymbol{G}$ to compute the elements of $\boldsymbol{a}$, since we are selecting the antennas based on their fading channel strength only. Next, in each step the maximum value of this vector is determined and the corresponding antenna is added to the subset $\boldsymbol{s}$, representing the selected antennas. Then, after each new inclusion in the subset $\boldsymbol{s}$, the EE is computed using~\eqref{EE} and compared to the previous EE. This process continues while the overall EE increases with the inclusion of new elements and the algorithm stops whenever EE at the step $n$ is lower than that at the step $n-1$. Finally, the algorithm returns the subset $\boldsymbol{s}$ of the selected antennas as well as $F = \text{card}\left(\boldsymbol{s}\right)$, where $\text{card}(.)$ represents the cardinality of the set.

\begin{table*}
	\centering
	\caption{Antenna selection algorithm}
	\begin{tabular}{l|l}
		\hline
		\textbf{Algorithm steps} & \textbf{Calculation order} \\\hline
		$\boldsymbol{a}:=[a_1 \hspace{1mm} a_2 \hspace{1mm} \ldots \hspace{1mm} a_M]$ & \\
		$\textbf{for} \hspace{1.5mm} m:=1 \hspace{1mm} \textbf{to} \hspace{1mm} M$&\\
		\hspace{6mm}$a_m=\sum_{k=1}^K\mid\boldsymbol{H}(m,k)\mid^2$ & $O(K)$ \\
		\textbf{end}&\\
		$\eta_\text{E}(0) := 0$ &\\
		$\textbf{for} \hspace{1.5mm} n:=1 \hspace{1mm} \textbf{to} \hspace{1mm} M $ &\\
		\hspace{6mm}$u:=\arg\max \, \boldsymbol{a}$ & $O(M)$ \\
		\hspace{6mm}$\boldsymbol{s}(n):=u$&\\
		\hspace{6mm}$\boldsymbol{a_u}:=0$&\\
		\hspace{6mm}compute $\eta_\text{E}(n)$ using~\eqref{EE} using only the subset $\boldsymbol{s}$&\\
		\hspace{6mm}\textbf{if} $\eta_\text{E}(n)<\eta_\text{E}(n-1)$&\\
		\hspace{12mm}delete $\boldsymbol{s}(n)$&\\
		\hspace{12mm}exit&\\
		\hspace{6mm}\textbf{end}&\\
		\textbf{end}&\\
		$F = \text{card}\left(\boldsymbol{s}\right)$  &\\
		\textbf{return} $F$, $\boldsymbol{s}$&\\
		\hline
	\end{tabular}
	\label{Table algorithm}
\end{table*}

Moreover, let us remark that the algorithm in Table~\ref{Table algorithm} requires the numerical computation of~\eqref{EE}, since the emitted power $P_\text{e}$ is not in closed-form. However, due to the use of the Frobenius norm, we are able to provide an approximation for $P_\text{e}$, and consequently simplify the antenna selection algorithm.

\subsection{Approximation of $P_e$ When Employing Antenna Selection}  \label{Estimation}
According to~\cite{bjornson2014optimal}, the Wishart theorem can be used to write $P_\text{e}$ as a function of $K$ and $M$ when all antennas are used. Let us denote by $\tilde{\boldsymbol{H}}$ the $F\times K$ matrix of selected small-fading coefficients, which is obtained in the same way as $\tilde{\boldsymbol{G}}$, \emph{i.e}, by using $F$ appropriately selected rows from $\boldsymbol{H}$. However, as pointed out by~\cite{li2014energy}, it is quite complex to solve~\eqref{Peup} in closed-form when only a subset of the $M$ antennas is used, since the the distribution of the elements of $\tilde{\boldsymbol{H}}$ is no longer Gaussian in the case of antenna selection. 

Then, in order to solve this problem, we propose to approximate $\tilde{\boldsymbol{H}}$ using an i.i.d. complex Gaussian matrix $\hat{\boldsymbol{H}}$, so that
\begin{equation}
\hat{\boldsymbol{H}}(f,k)\sim\mathcal{CN}(0,2\hat{\sigma}^2),\hspace{2mm} f \in [1,F],\hspace{1mm}k\in [1,K],\label{H_hat}
\end{equation}
where variance of each real and imaginary part of $\hat{\boldsymbol{H}}(f,k)$ is equal to $\hat{\sigma}^2$. Notice that this model has a single parameter to be determined, so that we only need to find $\hat{\sigma}^2$ analytically. Then, in order to achieve a fair approximation for $\tilde{\boldsymbol{H}}$, we first constrain the energies of $\tilde{\boldsymbol{H}}$ and $\hat{\boldsymbol{H}}$ to be equal, \emph{i.e.},
\begin{align}
\sum_{f=1}^{F}\sum_{k=1}^{K}\mathbb{E}\{\mid\tilde{\boldsymbol{H}}(f,k)\mid^2\}&=\sum_{f=1}^{F}\sum_{k=1}^{K}\mathbb{E}\{\mid\hat{\boldsymbol{H}}(f,k)\mid^2\}\nonumber\\&\stackrel{(\star)}{=}2FK\hat{\sigma}^2, \label{equalVar}
\end{align}
where $(\star)$ is concluded due to the independence among the elements of $\tilde{\boldsymbol{H}}$. Moreover, the left hand side of~\eqref{equalVar} can be written as
\begin{align}
\sum_{f=1}^{F}\sum_{k=1}^{K}\mathbb{E}\{\mid\tilde{\boldsymbol{H}}(f,k)\mid^2\}&=\sum_{f=1}^{F}\mathbb{E}\{\sum_{k=1}^{K}\mid\tilde{\boldsymbol{H}}(f,k)\mid^2\}\nonumber\\&=\sum_{f=1}^{F}\mathbb{E}\{a_{f:M}\}, \label{leftthandside}
\end{align}
where $a_{f:M}$ denotes the $f^\text{th}$ greatest value in $\boldsymbol{a}$ out of $M$, so that the sum of the $F$ largest values of $\boldsymbol{a}$ is required in~\eqref{leftthandside} to obtain $\hat{\sigma}^2$. In this case, due to antenna selection algorithm presented in Table~\ref{Table algorithm}, the mean of entries of $\boldsymbol{a}$ can be achieved using order statistics. 

Regarding the distribution of the matrix $\boldsymbol{H}$, we know that $|\boldsymbol{H}(f,k)|^2$ has an exponential distribution for all values of $f, k$, \emph{i.e.},
\begin{equation}
\mid\boldsymbol{H}(m,k)\mid^2\sim\mathcal{E}(2\sigma^2), \hspace{1mm} m=1,\ldots,M,\hspace{1mm}k=1,\ldots,K. \label{Hfkdis}
\end{equation}
Thus, according to Table~\ref{Table algorithm} each element of the $M\times1$ vector $\boldsymbol{a}$ is the sum of $K$ independent exponentially distributed random variables, which in turn results in a Gamma distributed random variable~\cite{papoulis2002probability}
\begin{equation}
a_m\sim\mathcal{G}(K,2\sigma^2),\hspace{5mm} m=1,\ldots,M. \label{a_idis}
\end{equation}

In the following step, we must choose the $F$ largest values of $\boldsymbol{a}$. To that end we follow~\cite{david2003order}, which has shown that for $M$ i.i.d. realizations of a random variable $z$, with probability density function (pdf) given by $P(z)$, and cumulative distribution function given by $C(z)$, we have the mean of the $r^\text{th}$ greatest realization, out of $M$, given by
\begin{align}
&\mu_{r:M}=\nonumber\\&\frac{M!}{(r-1)!(n-r)!}\int_{-\infty}^{\infty}zP(z)C^{r-1}(z)(1-C(z))^{(M-r)}dz, \label{muexact}
\end{align}
which can be achieved in closed-form for a few specific distributions, such as the Uniform distribution~\cite{david2003order}, but not for the case of the Gamma distribution. 

Nevertheless, by employing~\eqref{muexact} we can obtain the mean of the $F$ largest values in $\boldsymbol{a}$, and we resort to the following theorem to find an approximate value for $\sum_{f=1}^{F}\mathbb{E}\{a_{f:M}\}$. 
\begin{thm} \label{thm1}
	For jointly distributed random variables 
	$Z_1,Z_2,\ldots,Z_M$ with means $\mu_i=\mu$, and variances $\sigma_{iz}^2=\sigma_z^2$, $i=1,\ldots,M$, we have 
	\begin{equation} \label{thm1_1}
	\left\lvert\sum_{i=1}^{M}\lambda_i(\mu_{i:M}-\mu)\right\lvert\le\sigma_z\sqrt{M\sum_{i=1}^{M}(\lambda_i-\bar{\lambda})^2}, 
	\end{equation}
	for any $\boldsymbol{\lambda}=[\lambda_1,\ldots,\lambda_M]\in\mathbb{R}^M$, where $\bar{\lambda}=\frac{1}{M}\sum_{i=1}^{M}\lambda_i$.
\end{thm}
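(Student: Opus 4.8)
The plan is to use only two elementary facts about the order statistics $\mu_{i:M}=\mathbb{E}\{Z_{i:M}\}$, where $Z_{1:M}\ge Z_{2:M}\ge\cdots\ge Z_{M:M}$ denotes the decreasing rearrangement of $Z_1,\ldots,Z_M$: first, that $(Z_{1:M},\ldots,Z_{M:M})$ is a permutation of $(Z_1,\ldots,Z_M)$, so every symmetric function of the two lists agrees; second, the Cauchy--Schwarz inequality. From the permutation property, $\sum_{i=1}^M\mu_{i:M}=\sum_{i=1}^M\mathbb{E}\{Z_i\}=M\mu$, hence $\sum_{i=1}^M(\mu_{i:M}-\mu)=0$. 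This is the key reduction: it lets me subtract $\bar\lambda$ from every weight at no cost, so that
\[
\sum_{i=1}^M\lambda_i(\mu_{i:M}-\mu)=\sum_{i=1}^M(\lambda_i-\bar\lambda)(\mu_{i:M}-\mu).
\]

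I would then apply Cauchy--Schwarz to the right-hand side, which gives
\[
\Bigl|\sum_{i=1}^M(\lambda_i-\bar\lambda)(\mu_{i:M}-\mu)\Bigr|\le\sqrt{\sum_{i=1}^M(\lambda_i-\bar\lambda)^2}\ \sqrt{\sum_{i=1}^M(\mu_{i:M}-\mu)^2},
\]
so the theorem follows once I show $\sum_{i=1}^M(\mu_{i:M}-\mu)^2\le M\sigma_z^2$. For this, apply Jensen's inequality to the convex map $x\mapsto x^2$ for each $i$: $(\mu_{i:M}-\mu)^2=\bigl(\mathbb{E}\{Z_{i:M}-\mu\}\bigr)^2\le\mathbb{E}\{(Z_{i:M}-\mu)^2\}$. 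Summing over $i$ and using the permutation property once more, $\sum_{i=1}^M(Z_{i:M}-\mu)^2=\sum_{i=1}^M(Z_i-\mu)^2$; taking expectations and invoking $\mathbb{E}\{(Z_i-\mu)^2\}=\sigma_z^2$ then yields $\sum_{i=1}^M(\mu_{i:M}-\mu)^2\le M\sigma_z^2$. Plugging this into the Cauchy--Schwarz bound produces exactly $\bigl|\sum_i\lambda_i(\mu_{i:M}-\mu)\bigr|\le\sigma_z\sqrt{M\sum_i(\lambda_i-\bar\lambda)^2}$.

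The argument uses neither independence of the $Z_i$ nor any distributional assumption beyond equal means and equal variances; the joint law enters only through the rearrangement, which is exactly what the application to the Gamma-distributed entries of $\boldsymbol{a}$ needs. The step I expect to be the main (and essentially only) subtlety is the bound $\sum_i(\mu_{i:M}-\mu)^2\le M\sigma_z^2$: it is tempting to manipulate the deterministic numbers $\mu_{i:M}$ as though they were the random $Z_{i:M}$, whereas the correct route is to first move the square inside the expectation via Jensen and only then exploit that $\sum_i(Z_{i:M}-\mu)^2$ is order-independent. Everything else is the re-centering observation of the first paragraph followed by a single use of Cauchy--Schwarz.
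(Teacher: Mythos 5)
Your proof is correct and complete: the recentering via $\sum_{i=1}^{M}(\mu_{i:M}-\mu)=0$, a single Cauchy--Schwarz step, and the bound $\sum_{i=1}^{M}(\mu_{i:M}-\mu)^2\le M\sigma_z^2$ obtained from Jensen plus the permutation identity are all valid, and you rightly note that no independence is needed. The paper gives no proof of its own, deferring entirely to the citation of Arnold and Balakrishnan, and your argument is precisely the classical one found there, so there is nothing to contrast.
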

\begin{proof}
	Please see \cite{arnold2012relations}. 
\end{proof}
Assuming that the bound presented in Theorem~\ref{thm1} is tight \cite{arnold2012relations}, the following corollary can be employed to derive an upper bound for $\sum_{f=1}^{F}\mathbb{E}\{a_{f:M}\}$.
\begin{cor} \label{cor1}
	According to~\eqref{thm1_1}, the mean $\sum_{f=1}^{F}\mathbb{E}\{a_{f:M}\}$ is upper bounded by
	\begin{equation} \label{inequality1}
	\sum_{f=1}^{F}\mathbb{E}\{a_{f:M}\}\le2K\sigma^2+2\sigma^2\sqrt{K\frac{M-F}{F}}.
	\end{equation}
\end{cor}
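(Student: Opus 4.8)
The plan is to obtain \eqref{inequality1} as a direct specialization of Theorem~\ref{thm1} to the vector $\boldsymbol{a}$, whose entries satisfy the hypotheses of that theorem with explicit constants. First I would record the first two moments of the entries of $\boldsymbol{a}$: by \eqref{Hfkdis} each $|\boldsymbol{H}(m,k)|^2$ is exponential with mean $2\sigma^2$ and variance $4\sigma^4$, and since, by \eqref{a_idis}, $a_m$ is the sum of $K$ such independent variables, it is Gamma distributed with mean $\mu=2K\sigma^2$ and variance $\sigma_z^2=4K\sigma^4$, i.e. $\sigma_z=2\sigma^2\sqrt{K}$. In particular the $a_m$ are identically distributed, so $\mu_m=\mu$ and $\sigma_{mz}^2=\sigma_z^2$ for every $m$, which is precisely what Theorem~\ref{thm1} demands (note that it does not require independence, only a common mean and variance).

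Next I would choose the weight vector $\boldsymbol{\lambda}$ that isolates the quantity of interest, namely $\lambda_i=\tfrac{1}{F}$ for $i=1,\dots,F$ and $\lambda_i=0$ for $i=F+1,\dots,M$. With this choice $\sum_{i=1}^{M}\lambda_i\mu_{i:M}=\tfrac{1}{F}\sum_{f=1}^{F}\mathbb{E}\{a_{f:M}\}$, the average of the $F$ largest order-statistic means, while $\bar{\lambda}=\tfrac{1}{M}$ and $\sum_{i=1}^{M}(\lambda_i-\bar{\lambda})^2=F(\tfrac{1}{F}-\tfrac{1}{M})^2+(M-F)\tfrac{1}{M^2}=\tfrac{M-F}{MF}$, so that $M\sum_{i=1}^{M}(\lambda_i-\bar{\lambda})^2=\tfrac{M-F}{F}$. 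Substituting into \eqref{thm1_1} gives $\bigl|\tfrac{1}{F}\sum_{f=1}^{F}(\mathbb{E}\{a_{f:M}\}-\mu)\bigr|\le\sigma_z\sqrt{(M-F)/F}$; since the $F$ largest values necessarily average at least $\mu$, the relevant side of the absolute value is the upper one, and inserting $\mu=2K\sigma^2$, $\sigma_z=2\sigma^2\sqrt{K}$ yields $\tfrac{1}{F}\sum_{f=1}^{F}\mathbb{E}\{a_{f:M}\}\le2K\sigma^2+2\sigma^2\sqrt{K\tfrac{M-F}{F}}$, which is \eqref{inequality1} read per selected antenna. Taking instead $\lambda_i\in\{0,1\}$ (the indicator of the top $F$ positions) gives the companion bound $\sum_{f=1}^{F}\mathbb{E}\{a_{f:M}\}\le 2FK\sigma^2+2\sigma^2\sqrt{KF(M-F)}$, and it is this form that then feeds into \eqref{equalVar} to solve for $\hat{\sigma}^2$.

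There is no deep obstacle in this argument: Theorem~\ref{thm1} carries all of the probabilistic weight, and because the relevant $\boldsymbol{\lambda}$ is (a multiple of) the indicator of the top-$F$ positions, $\sum_i(\lambda_i-\bar{\lambda})^2$ collapses to a single closed-form expression. The only points that need care are (i) reading the Gamma mean and variance correctly off \eqref{Hfkdis}--\eqref{a_idis}, since any slip there propagates straight into the constant multiplying the square root, and (ii) being consistent about whether \eqref{inequality1} is meant for the sum or for the average of the $F$ leading order statistics — the displayed right-hand side is the averaged version, so the factor $F$ must be reinstated when the bound is substituted into \eqref{equalVar}.
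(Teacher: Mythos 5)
Your proof is correct and follows essentially the same route as the paper's Appendix~\ref{ApCor1}: both specialize Theorem~\ref{thm1} to a weight vector supported on the top-$F$ positions (you take $\lambda_i=1/F$ where the paper takes $\lambda_i=1$ and then divides by $F$, which is equivalent since \eqref{thm1_1} is homogeneous in $\boldsymbol{\lambda}$), evaluate $M\sum_{i}(\lambda_i-\bar{\lambda})^2=(M-F)/F$, and insert the Gamma moments $\mu=2K\sigma^2$ and $\sigma_z=2\sigma^2\sqrt{K}$. You also correctly flag the sum-versus-average mismatch between the two sides of \eqref{inequality1}, which is precisely the factor of $F$ that the paper's own derivation silently absorbs when it "divides both sides by $F$" before substituting into \eqref{equalVar}.
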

\begin{proof}
	Please refer to Appendix~\ref{ApCor1}. 
\end{proof}

Due to the tightness of this upper bound, as the numerical results will show, in the rest of this paper we assume that
\begin{equation}
\sum_{f=1}^{F}\mathbb{E}\{a_{f:M}\}\approx 2K\sigma^2+2\sigma^2\sqrt{K\frac{M-F}{F}}. \label{approximation1}
\end{equation}
Then, putting~\eqref{approximation1} into~\eqref{equalVar}, $\hat{\sigma}^2$ is finally achieved as
\begin{equation} \label{sigmahat}
\hat{\sigma}^2=\sigma^2\left(1+\sqrt{\frac{M-F}{FK}}\right).
\end{equation}

Moreover, using this Gaussian approximation for the selected antenna matrix, we can also approximate $E\{tr(\tilde{\boldsymbol{H}}^H\tilde{\boldsymbol{H}})^{-1}\}$, which is needed to calculate $P_\text{e}$. Based on the Wishart theorem~\cite{tulino2004random}  we know that
\begin{equation}
\mathbb{E}\{tr(\hat{\boldsymbol{H}}^H\hat{\boldsymbol{H}})^{-1}\}=\frac{K}{F-K}\frac{1}{2\sigma^2(1+\sqrt{\frac{M-F}{FK}})}. \label{wishart}
\end{equation}
Then, since the matrix $\tilde{\boldsymbol{H}}$ is approximated by $\hat{\boldsymbol{H}}$,
\begin{align}
\mathbb{E}\{tr(\tilde{\boldsymbol{H}}^H\tilde{\boldsymbol{H}})^{-1}\}&\approx \mathbb{E}\{tr(\hat{\boldsymbol{H}}^H\hat{\boldsymbol{H}})^{-1}\}\nonumber\\&=\frac{K}{F-K}\frac{1}{2\sigma^2(1+\sqrt{\frac{M-F}{FK}})}. \label{trace}
\end{align}

As we can observe, the antenna selection algorithm introduces the term $(1+\sqrt{\frac{M-F}{FK}})$ to the denominator of~\eqref{trace}, comparing to the case when the system employs all $M$ antennas. This term indicates that there is an optimization to be performed in $F$, since $\mathbb{E}\{tr(\tilde{\boldsymbol{H}}^H\tilde{\boldsymbol{H}})^{-1}\}$, which is directly related to the transmitted power, is not an increasing function of $F$. This equation shows that $E\{tr(\tilde{\boldsymbol{H}}^H\tilde{\boldsymbol{H}})^{-1}\}$ decreases as the number of selected antennas $F$ increases; however, the speed of this decrease also depends on number of UTs, $K$. In addition, let us remark that the tightness of the approximation provided by~\eqref{trace} will be validated in Section~\ref{Numerical Results}. 

\begin{cor} \label{cor2}
	Using~\eqref{trace}, an approximation for the emitted power can be written as
	\begin{equation} \label{puldeter}
	\hat{P}_\text{e} = \frac{BW\sigma_n^2K\mathbb{E}\{r^{-1}\}(e^{\frac{\bar{R}}{BW}}-1)}{(F-K)2\sigma^2(1+\sqrt{\frac{M-F}{FK}})}, 
	\end{equation}
	where $r$ models the path-loss attenuation\footnote{Assuming that the large-scale fading is dominated by the path-loss.} and thus $\mathbb{E}\{r^{-1}\}$ is not a function of neither $F$ nor $K$.
\end{cor}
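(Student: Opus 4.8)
The plan is to unwind the definition~\eqref{Peup} of $P_\text{e}$ until it only involves a quantity that the Wishart-based approximation~\eqref{trace} can evaluate. First I would exploit that $\boldsymbol{A}^{(dl)}=(\boldsymbol{A}^{(ul)})^T$ is diagonal, so that the quadratic form collapses to $\boldsymbol{1}_K^T(\boldsymbol{A}^{(dl)})^{-1}\boldsymbol{1}_K=\sum_{k=1}^{K}(\boldsymbol{A}^{(dl)}_{(k,k)})^{-1}$; substituting the entries from~\eqref{Ak,k} this equals $(2^{\bar{R}/BW}-1)\sum_{k=1}^{K}\parallel\boldsymbol{v}_k\parallel^2$. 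Since $\boldsymbol{v}_k$ is the $k^\text{th}$ column of $\boldsymbol{V}^T$ and $\boldsymbol{V}=(\tilde{\boldsymbol{G}}^H\tilde{\boldsymbol{G}})^{-1}\tilde{\boldsymbol{G}}^H$, the ZF structure gives $\boldsymbol{V}\boldsymbol{V}^H=(\tilde{\boldsymbol{G}}^H\tilde{\boldsymbol{G}})^{-1}$, hence $\sum_{k=1}^{K}\parallel\boldsymbol{v}_k\parallel^2=tr(\boldsymbol{V}\boldsymbol{V}^H)=tr((\tilde{\boldsymbol{G}}^H\tilde{\boldsymbol{G}})^{-1})$. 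Recalling $P_\text{e}=P_\text{e}^{(dl)}$, this already yields $P_\text{e}=BW\sigma_n^2(2^{\bar{R}/BW}-1)\,\mathbb{E}\{tr((\tilde{\boldsymbol{G}}^H\tilde{\boldsymbol{G}})^{-1})\}$, the rate-dependent factor being exactly the one appearing in~\eqref{Ak,k}.

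Next I would peel off the large-scale fading. Using~\eqref{GHB} restricted to the selected rows, $\tilde{\boldsymbol{G}}=\tilde{\boldsymbol{H}}\boldsymbol{B}^{1/2}$, so by the cyclic property of the trace $tr((\tilde{\boldsymbol{G}}^H\tilde{\boldsymbol{G}})^{-1})=tr(\boldsymbol{B}^{-1}(\tilde{\boldsymbol{H}}^H\tilde{\boldsymbol{H}})^{-1})=\sum_{k=1}^{K}r_k^{-1}[(\tilde{\boldsymbol{H}}^H\tilde{\boldsymbol{H}})^{-1}]_{(k,k)}$. The key point is that the selection metric $a_m=\sum_{k}\mid\boldsymbol{H}(m,k)\mid^2$ in Table~\ref{Table algorithm} depends on $\boldsymbol{H}$ alone, so $\tilde{\boldsymbol{H}}$ (the selected rows) is independent of $\boldsymbol{B}$; moreover $a_m$ is symmetric in the user index, so the law of $(\tilde{\boldsymbol{H}}^H\tilde{\boldsymbol{H}})^{-1}$ is exchangeable across its diagonal entries. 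Assuming the UTs share a common path-loss distribution, these two facts let the expectation factor as $\mathbb{E}\{tr((\tilde{\boldsymbol{G}}^H\tilde{\boldsymbol{G}})^{-1})\}=\mathbb{E}\{r^{-1}\}\,\mathbb{E}\{tr((\tilde{\boldsymbol{H}}^H\tilde{\boldsymbol{H}})^{-1})\}$, with $\mathbb{E}\{r^{-1}\}$ a purely geometric quantity that does not depend on $F$ or $K$.

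Finally I would invoke the approximation~\eqref{trace} for $\mathbb{E}\{tr((\tilde{\boldsymbol{H}}^H\tilde{\boldsymbol{H}})^{-1})\}$ and collect constants, which reproduces~\eqref{puldeter} (its tightness having been deferred to Section~\ref{Numerical Results}). The genuinely delicate step is the factorisation in the second paragraph: it is precisely the independence of the selection from the large-scale fading --- a consequence of running the algorithm on $\boldsymbol{H}$ rather than $\boldsymbol{G}$ --- together with the user-symmetry of the metric, that permits pulling $\mathbb{E}\{r^{-1}\}$ out of the trace; the remaining manipulations are bookkeeping, and the only approximation carried over is the Gaussian/Wishart replacement $\tilde{\boldsymbol{H}}\approx\hat{\boldsymbol{H}}$ underlying~\eqref{trace}.
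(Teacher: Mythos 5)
Your proof is correct and follows essentially the same route as the paper's Appendix~B: collapse the diagonal quadratic form in~\eqref{Peup} to a trace, peel off the path-loss matrix $\boldsymbol{B}$ from $(\tilde{\boldsymbol{H}}^H\tilde{\boldsymbol{H}})^{-1}$, and apply the Wishart-based approximation~\eqref{trace} together with $\mathbb{E}\{tr(\boldsymbol{B}^{-1})\}=K\mathbb{E}\{r^{-1}\}$; your explicit justification of the factorisation step --- independence of the selection (run on $\boldsymbol{H}$, not $\boldsymbol{G}$) from the large-scale fading, plus user-symmetry --- is actually more careful than the paper, which invokes it silently via a citation. The only cosmetic point is that you carry the factor $2^{\bar{R}/BW}-1$ from~\eqref{Ak,k} while the corollary states $e^{\bar{R}/BW}-1$; that base mismatch is an inconsistency internal to the paper, not a gap in your argument.
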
 
\begin{proof}
	Please refer to Appendix~\ref{ApCor2}. 
\end{proof}

\subsection{Approximated Antenna Selection Algorithm} \label{approxAlg}
Using the approximation of Corollary~\ref{cor2}, we can simplify~\eqref{EE} so that it can be approximated in closed-form by
\begin{align} 
&\hat\eta_\text{E} =\nonumber\\ &\frac{K\bar{R}}{\frac{BW\sigma_n^2K\mathbb{E}\{r^{-1}\}(e^{\frac{\bar{R}}{BW}}-1)}{(F-K)2\sigma^2(1+\sqrt{\frac{M-F}{FK}})}+\displaystyle{\sum_{i=0}^3 C_{i,0}K^i} +\displaystyle{\sum_{i=0}^2 C_{i,1}K^iF}+P_\text{fix}}, \label{EEd}
\end{align}
so that the proposed simplified algorithm can be easily obtained from Table~\ref{Table algorithm} by replacing $\eta_\text{E}$ by $\hat\eta_\text{E}$, using \eqref{EEd}.

\begin{remark} \label{rem2}
	Assuming that the UTs are uniformly distributed in a circular cell with radius $d_\text{max}$, the path-loss attenuation is given by
	\begin{align} 
	r(d)= \frac{\bar{d}}{\parallel d\parallel^\kappa},
	\end{align}
	where $d$ is the distance between the user and the BS, $\kappa$ is the path-loss exponent and the constant $\bar{d}$ regulates the channel attenuation at a reference distance $d_\text{min}$. In this case, the $\mathbb{E}\{r^{-1}\}$ can be calculated as~\cite{bjornson2014optimal}
	\begin{align}
	\label{Er}
	\mathbb{E}\{r^{-1}\}= \frac{d_\text{max}^{\kappa+2}-d_\text{min}^{\kappa+2}}{\bar{d}\left(1+\frac{\kappa}{2}\right)\left(d_\text{max}^{2}-d_\text{min}^{2}\right)}.
	\end{align}
\end{remark}

Throughout this paper we assume that the UTs are uniformly distributed in a circular cell, so that $\mathbb{E}\{r^{-1}\}$ is given by~\eqref{Er}. This assumption allows us to provide a simple closed-form expression for $\hat\eta_\text{E}$, whose optimum point with respect to $F$ can be calculated using the Newton's method or with a simple one dimensional global search, which we employ in next section to provide a few numerical examples.

\begin{remark} \label{rem1}
	In addition to $F$, \eqref{EEd} also suggests that the system could benefit from the optimization of $\bar{R}$, \emph{i.e.}, the UTs data rate. Since $\bar{R}$ and $F$ are independent variables, one dimensional global search can also be employed to verify its effect on the maximization of the EE.
\end{remark}

\section{Numerical Results} \label{Numerical Results}
In this section we numerically evaluate the proposed algorithm, using the approximated approach depicted in Section~\ref{approxAlg}, as well as the Monte Carlo simulations for the algorithm in Table~\ref{Table algorithm}. The results are obtained employing 2000 Monte Carlo iterations. Moreover, we use the same parameters for power consumption as in~\cite{bjornson2014optimal}, which are summarized in Table~\ref{Table Parameter}. Also, the 3GPP distance-dependent path-loss model is used and the total number of available antennas is considered to be $M=220$ in all scenarios. All simulation results are provided for ZF case, unless otherwise stated.

\begin{table*}[!h]
	\centering
	\caption{Parameter values for simulations}
	\begin{tabular}{|c|c||c|c|}
		\hline
		\textbf{Parameter} & \textbf{Value} & \textbf{Parameter} & \textbf{Value}\\\hline
		Minimum UT distance: $d_\text{min}$ & $35$~m & Maximum UT distance: $d_\text{max}$ & $250$~m \\
		Path-loss at distance $d$: $r(d)$ & $\frac{10^{-3.53}}{r^{3.76}}$ &  Power for coding: $P_\text{cod}$ & $4$~W \\
		Coherence bandwidth: $BW$ & $180$~kHz & Power for decoding: $P_\text{dec}$ & $0.5$~W \\
		Coherence time: $T$ & $32$~ms & Power of the RF-chain at TX: $P_\text{tx}$ & $1$~W \\
		Noise variance: $\sigma_n^2$ & $10^{-20}$~J/c.u. & Power of the RF-chain at RX: $P_\text{rx}$ & $0.3$~W \\
		Channel coefficients variance: $2\sigma^2$ & $1$ & Constant power consumption: $P_\text{fix}$ & $18$~W \\
		Operations efficiency: $L$ & $10^9$/Joule & &  \\
		\hline
	\end{tabular}
	\label{Table Parameter}
\end{table*}

Fig.~\ref{fig1} shows the general behavior of the approximated proposed algorithm, using~\eqref{EEd} to obtain the optimum $\bar{R}$ using a global search. Therefore, the maximal $\hat\eta_\text{E}$ is illustrated for each $(F,K)$ pair. As we can observe, $\hat\eta_\text{E} = 0$ in the scenarios when $F < K$, so that the UTs cannot be served in this case, and the EE is maximized when the number of selected antennas is in the range $K \leq F \leq M$. The global optimum point, which is marked in Fig.~\ref{fig1}, is achieved when $F=137$ and $K=97$, yielding $\eta_\text{E}=28.40$~Mbits/J.

\begin{figure}[t]
	\centering
	\includegraphics[width=0.5\textwidth]{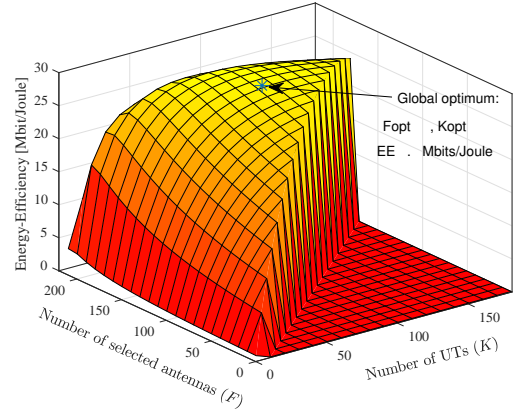}
	\caption{EE as a function of $K$ and $F$ using the proposed approximated algorithm, obtained employing~\eqref{EEd}. The global optimum point is also shown, for which $K=97$, $F=137$ and $\hat\eta_\text{E} = 28.40$~Mbits/J.} \label{fig1}
\end{figure}

Fig.~\ref{fig2} investigates the accuracy of Eq.~\eqref{puldeter} by comparing the Monte Carlo simulated $P_e$ with approximated one in~\eqref{puldeter}. In this figure, $\bar{R}/BW=5(bits/s/Hz)$ and simulation is done for $K=30$, $K=90$ and $K=150$. It is seen that the approximation for $P_e$ is quite tight and thus is a reasonable alternative for it. Therefore, since all other parameters are deterministic in Eq.~\eqref{EEd}, the proposed approximated EE also leads to very tight results with respect to the Monte Carlo simulations. Also,it is observed through simulations that as $F$ grows toward $M$, the approximation becomes more accurate.

\begin{figure}[t]
	\centering
	\includegraphics[width=0.5\textwidth]{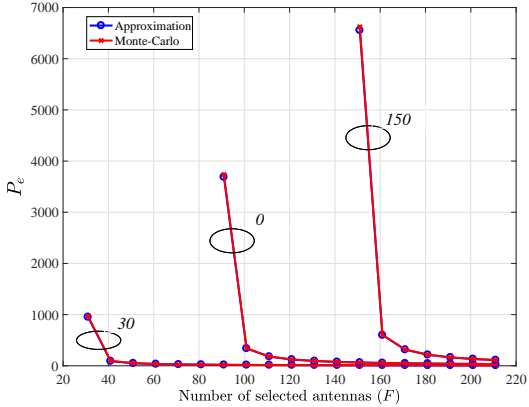}
	\caption{Comparison between the Monte Carlo simulation of $P_e$ with the approximated one in Eq.~\eqref{puldeter} for $K \in \{30, 90, 150\}$ when $\bar{R}/BW=5(bits/s/Hz)$.} \label{fig2}
\end{figure}

Next, Fig.~\ref{EED} shows the empirical eigenvalue distributions of the selected matrix of small-scale coefficients ($\tilde{\boldsymbol{H}}^*\tilde{\boldsymbol{H}}$) and its proposed approximation ($\hat{\boldsymbol{H}}^*\hat{\boldsymbol{H}}$) for $F=140$ and $K=70$.	The choice for the empirical eigenvalue distribution is due to its good reliability, and we have used a Kernel density estimator~\cite{tulino2004random} to plot Fig.~\ref{EED}. Moreover, it is known that assuming $\hat{\boldsymbol{H}}$ as an i.i.d. Gaussian matrix produces eigenvalues following a Mar\u cenko-Pastur distribution~\cite{tulino2004random}
\begin{equation}
f(x)=(1-\frac{F}{K})^+\delta(x)+\frac{K\sqrt{(x-\alpha)^+(\beta-x)^+}}{2\pi Fx\hat{\sigma}^2}, \label{MarchenkoPastur}
\end{equation}
where $\delta(x)$ is Dirac delta function, $\alpha=\hat{\sigma}^2(1-\sqrt{F/K})^2$ and $\beta=\hat{\sigma}^2(1+\sqrt{F/K})^2$. As it can be observed, the distribution of the eigenvalues of $\tilde{\boldsymbol{H}}^*\tilde{\boldsymbol{H}}$ is very similar to that of the eigenvalues of $\hat{\boldsymbol{H}}^*\hat{\boldsymbol{H}}$, showing that $\hat{\boldsymbol{H}}$ is a good approximation to $\tilde{\boldsymbol{H}}$.

\begin{figure}[t]
	\centering
	\includegraphics[width=0.5\textwidth]{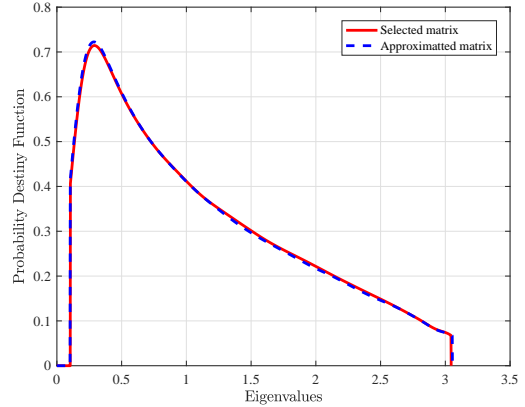}
	\caption{Empirical eigenvalue distribution of the selected matrix ($\tilde{\boldsymbol{H}}^*\tilde{\boldsymbol{H}}$) and its proposed approximation ($\hat{\boldsymbol{H}}^*\hat{\boldsymbol{H}}$), for $F=140$ and $K=70$.} \label{EED}
\end{figure}

Fig.~\ref{fig3} plots the maximal EE as a function of the number of UTs. For each $K$, the optimization of both $\bar{R}$ and $F$ is carried out and we compare the Monte Carlo simulation of the antenna selection algorithm, the approximated antenna selection algorithm of Section~\ref{approxAlg}, as well as the case when $F = M$ (without antenna selection) with the optimization of $\bar{R}$. The first observation is with respect to the tightness of our approximations in deriving~\eqref{EEd}, which is very close to the optimal curve obtained through Monte Carlo simulations. Moreover, this figure also shows a significant improvement in terms of EE when using only a subset of antennas, especially in the cases when the number of users is much lower than the number of antennas. For example, with $K=20$~UTs, the EE is about $110\%$ better than when the whole set of antennas is employed. In addition, when the number of UTs increases the optimal $F$ also increases, so that for $K > 160$ the optimum number of selected antennas is the same as the total number of available antennas; thus, both scenarios with and without antenna selection have the same performance in terms of EE.

\begin{figure}[t]
	\centering
	\includegraphics[width=0.5\textwidth]{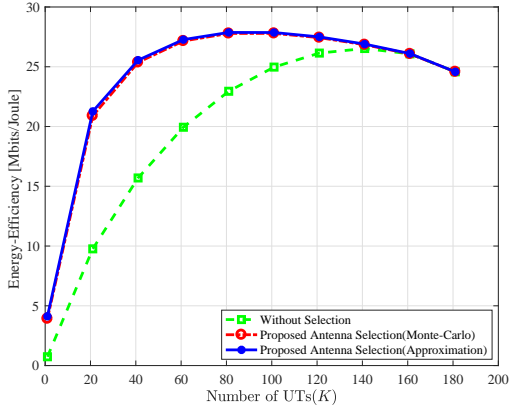}
	\caption{EE of the Monte Carlo simulation for the antenna selection algorithm with optimal $F$, the approximated antenna selection algorithm with optimal $F$, as well as the case when $F = M$ (without antenna selection). Moreover, $\bar{R}$ is also optimized for each scenario.} \label{fig3}
\end{figure}

Fig.~\ref{MMSEMF} represents the EE for MF and MMSE, both of them obtained through Monte Carlo simulation. As it can be seen comparing both results, EE is improved up to $60\%$ using the proposed antenna selection with  MF (compared to the case without antenna selection), and up to $88\%$ using the proposed antenna selection with MMSE. However, a big difference is observed when comparing the magnitude of the EE for MF and MMSE, which is mainly caused by a larger amount of interference in the case of MF. From our simulations we also notice that, due to the higher interference, MF tends to select the number of antennas equal to $K$, even for high number of UTs, jeopardizing the EE. On the other hand, MMSE shows very similar performance compared to ZF, as suggested in Section~\ref{System Model}, with MMSE achieving slightly better EE, while ZF exhibits better improvement when using the proposed antenna selection scheme. Nevertheless, when EE achieves it optimal point, the corresponding data rates and number of selected antennas for ZF and MMSE are very close, while MF has at least $3$ times lower data rates compared to the other schemes.

Fig.~\ref{fig6} shows EE for different values of $\bar{R}$, fixed at $40$~Mbits/s, $70$~Mbits/s and $100$~Mbits/s, respectively, so that the optimization is done with respect to the number of selected antennas, $F$, only in ZF case. This scenario represents many practical cases when the operators are interested in setting a constant data rate for the users. As it can be observed, there is a larger gap between the EE of the proposed antenna selection scheme and employing all $M$ antennas for lower data rates, which is because the optimum number of selected antennas is small for lower data rates, consequently increasing the EE. On the other hand, by increasing the required data rate, the optimal number of antennas $F^\star$ also increases until it reaches the saturation point of being equal to $M$.

\begin{figure}[t]
	\centering
	\includegraphics[width=0.5\textwidth]{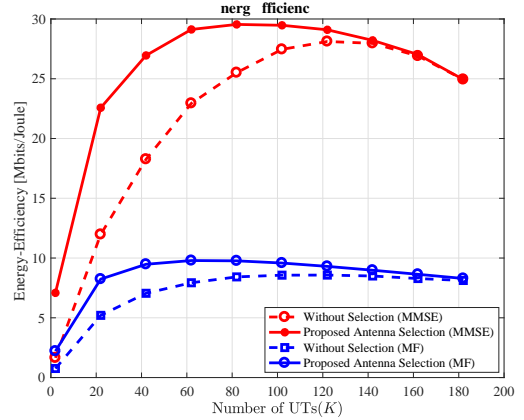}
	\caption{EE of Monte Carlo simulation for the proposed antenna selection algorithm and without selection case for MMSE and MF. Optimization is done for both $F$ and $\bar{R}$.} \label{MMSEMF}
\end{figure}

\begin{figure}[t]
	\centering
	\includegraphics[width=0.5\textwidth]{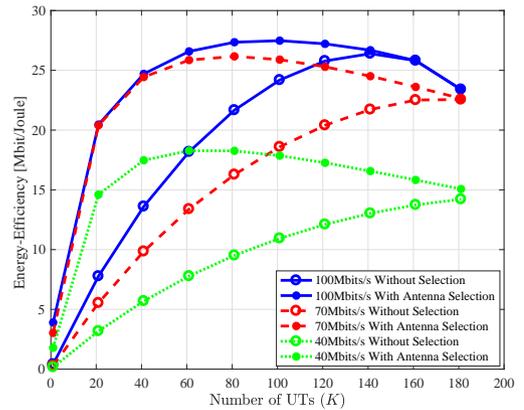}
	\caption{EE of the Monte Carlo simulation for the antenna selection algorithm with optimal $F$, the approximated antenna selection algorithm with optimal $F$, as well as the case when $F = M$ (without antenna selection). The data rate is assumed to be fixed at $\bar{R} = 40, 70$ and $100~Mbits/s$.} \label{fig6}
\end{figure}

For the same scenarios as Fig.~\ref{fig6}, the optimal number of selected antennas is shown in Fig.~\ref{fig7} as a function of the number of UTs. As we can notice, $F^\star$ is always slightly higher than $K$, but the relation depends on the requirement in terms of $\bar{R}$. Moreover, $F^\star$ saturates at the maximum value of $220$ (equal to $M$) for higher data rates, when $\bar{R} = 100$~Mbps in this case, for which the EE is the same as in the non-selection scheme. Let us remark that we assume that $K$ is not a parameter to be optimized in this work, since it depends on the particular network scenario. Thus, Fig.~\ref{fig7} provides $F^\star$ as a function $K$.

\begin{figure}[t]
	\centering
	\includegraphics[width=0.5\textwidth]{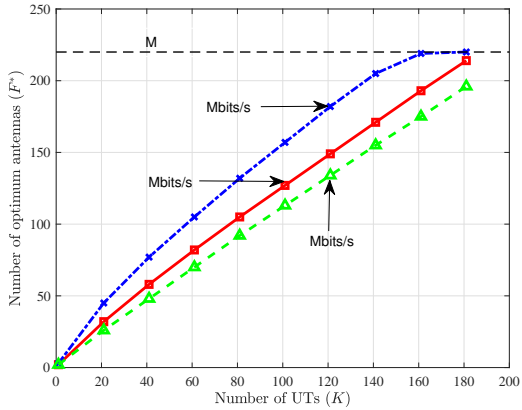}
	\caption{Optimum number of selected antennas ($F^\star$) as a function of the number of UTs ($K$) for different values of $\bar{R}$} \label{fig7}
\end{figure}

Finally, Fig.~\ref{fig8} investigates the EE as a function of the system spectral efficiency (defined as $\bar{R}/BW$ in bits/s/Hz \cite{ngo2013energy}), in the case when the number of UTs is $K=90$ and $F$ is optimized in order to maximize the EE. A remarkable improvement in terms of EE can be observed with the proposed antenna selection scheme. Moreover, the gap between using antenna selection or not using it, decreases for high spectral efficiency, which is due to the requirement for more antennas to serve higher data rates while there exists a limited number of available antennas. It is worth noting that more than $30\%$ of improvement in terms of EE can be achieved when the system operates either with $3.7$~bits/s/Hz or $2$~bits/s/Hz, which are the spectral efficiency targets for LTE-advanced systems at the downlink and uplink, respectively~\cite{rumney2013lte}.

\begin{figure}[t]
	\centering
	\includegraphics[width=0.5\textwidth]{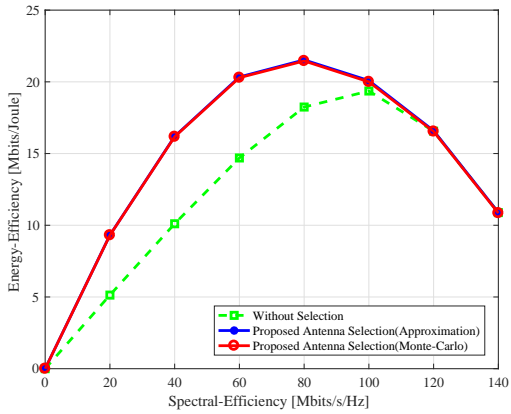}
	\caption{EE as a function of the spectral efficiency. For each point of the curve, the optimization is performed in terms of $F$.} \label{fig8}
\end{figure}

\section{Conclusion} \label{Conclusion}
Massive MIMO systems yield high data rates to the user terminal by employing a very large number of antennas. However, as a consequence, the power consumption also increases with the number of antennas, while EE is a vital concern for new generation wireless systems. In this paper we considered a multi-user massive MIMO scenario and we proposed an antenna selection scheme so that the system may operate with a subset of $F \leq M$ antennas in order to maximize the EE. By employing a realistic power consumption model for massive MIMO, we derived a closed-form approximation for the EE formulation ir order to reduce the mathematical complexity, which is shown to be very tight with the optimal results. Simulation results showed that using antenna selection significantly improves the EE, especially when the number of UTs is relatively small with respect to the number of available antennas. Moreover, the optimal number of antennas was shown to be slightly higher than the number of UTs, while the relation depends on the required data rate. Finally, when both the number of employed antennas and the average data rate to the users are optimized, up to $110\%$ for $K=20$ improvement in EE is observed.

\appendix
\section{Proof of Corollary \ref{cor1}} \label{ApCor1}
Using a triangle inequality in~\eqref{thm1_1} we have that
\begin{align}
&\sum_{i=1}^{M}\lambda_i\mu_{i:M} - \left\lvert\sum_{i=1}^{M}\lambda_i\mu\right\lvert \le \left\lvert\sum_{i=1}^{M}\lambda_i\mu_{i:M}\right\lvert - \left\lvert\sum_{i=1}^{M}\lambda_i\mu\right\lvert \nonumber\\& \le \left\lvert\sum_{i=1}^{M}\lambda_i(\mu_{i:M}-\mu)\right\lvert \le \sigma\sqrt{M\sum_{i=1}^{M}(\lambda_i-\bar{\lambda})^2}.
\end{align}
Then, noting that
\begin{equation}
\mu=2K\sigma^2>0\hspace{3mm}, \hspace{3mm}\sigma_z^2=4K\sigma^4, \nonumber
\end{equation}
we would have
\begin{equation} \label{aftertri}
\sum_{i=1}^{M}\lambda_i\mu_{i:M}\le \sum_{i=1}^{M}\lambda_i2K\sigma^2+2\sigma^2\sqrt{KM\sum_{i=1}^{M}(\lambda_i-\bar{\lambda})^2}.
\end{equation}

Now let us choose $\lambda_i$ as the following
\begin{equation} \label{zabet}
\lambda_i= \left\{
\begin{array}{ll} 
1 &  \hspace{5mm}\text{if the $i^\text{th}$ antenna is selected }\\
0 &  \hspace{5mm}\text{otherwise},
\end{array} \right.
\end{equation}
so that $\bar{\lambda}=\frac{F}{M}$.

By replacing~\eqref{zabet} into the right hand side of~\eqref{aftertri} we have
\begin{equation}
\begin{split}
&\sum_{i=1}^{M}\lambda_i2K\sigma^2+2\sigma^2\sqrt{KM\sum_{i=1}^{M}(\lambda_i-\bar{\lambda})^2} \\
&=2FK\sigma^2+2\sigma^2\sqrt{KM\left[F\left(1-\frac{F}{M}\right)^2+(M-F)\left(-\frac{F}{M}\right)^2\right]} \\
&=2FK\sigma^2+2\sigma^2\sqrt{K[F(M-F)]} \label{eq38}.
\end{split}
\end{equation}

Finally, inserting~\eqref{eq38} into~\eqref{aftertri}, replacing $\lambda_i$ by $E\{a_{f:M}\}$ and dividing both sides of the inequality by $F$ yields \eqref{inequality1}, concluding the proof.
\qed

\section{Proof of Corollary \ref{cor2}} \label{ApCor2}
Taking into account that $\boldsymbol{A}^{(dl)}$ is a diagonal matrix we can write
\begin{equation}
\mathbb{E}\{\boldsymbol{1}_K^T (\boldsymbol{A}^{(dl)})^{-1}\boldsymbol{1}_K\}=\mathbb{E}\{tr(\boldsymbol{A}^{(dl)})^{-1}\}.
\end{equation}
Then, using~\eqref{Ak,k} and noting that $\parallel\boldsymbol{v}_k\parallel^2=(\tilde{\boldsymbol{G}}^H\tilde{\boldsymbol{G}})^{-1}_{(k,k)}$ we have
\begin{equation}
\mathbb{E}\{tr(\boldsymbol{A}^{(dl)})^{-1}\}=(e^{\frac{\bar{R}}{BW}}-1)\mathbb{E}\{tr(\tilde{\boldsymbol{G}}^H\tilde{\boldsymbol{G}})^{-1}\}.
\end{equation}

Moreover, using~\eqref{wishart} in \cite[eq.(50)]{maiwald2000calculation},
\begin{equation}
\mathbb{E}\{tr(\boldsymbol{A}^{(dl)})^{-1}\}\approx\frac{(e^{\frac{\bar{R}}{BW}}-1)}{(F-K)(1+\sqrt{\frac{M-F}{FK}})}\mathbb{E}\{tr(\boldsymbol{B}^{-1})\}.
\end{equation}

Since the expectation is the same for all UTs due to the assumed path-loss model~\cite{bjornson2014optimal}, then $\mathbb{E}\{r_k^{-1}\}=\mathbb{E}\{r^{-1}\}$, so that
\begin{equation}
\mathbb{E}\{tr(\boldsymbol{B}^{-1})\}=K\mathbb{E}\{r^{-1}\}
\end{equation}
and therefore
\begin{equation} \label{EAdl}
\mathbb{E}\{tr(\boldsymbol{A}^{(dl)})^{-1}\}\approx\frac{K(e^{\frac{\bar{R}}{BW}}-1)}{(F-K)(1+\sqrt{\frac{M-F}{FK}})}\mathbb{E}\{r^{-1}\}.
\end{equation}

Finally, putting~\eqref{EAdl} into~\eqref{Peup} results in~\eqref{puldeter}, which concludes the proof.
\qed


\begin{thebibliography}{9}
	\bibitem{hoydis2013massive}
	Hoydis, J., Ten Brink, S., \& Debbah, M. (2013). Massive MIMO in the UL/DL of cellular networks: How many antennas do we need?. \textit{IEEE Journal on Selected Areas in Communications}, 31(2), 160-171.
	\bibitem{andrews2014will}
	Andrews, J. G., Buzzi, S., Choi, W., Hanly, S. V., Lozano, A., Soong, A. C., \& Zhang, J. C. (2014). What will 5G be?. \textit{IEEE Journal on Selected Areas in Communications}, 32(6), 1065-1082.
	\bibitem{larsson2014massive}
	Larsson, E., Edfors, O., Tufvesson, F., \& Marzetta, T. (2014). Massive MIMO for next generation wireless systems. \textit{IEEE Communications Magazine}, 52(2), 186-195.
	\bibitem{imran2011energy}
	Imran, MA and Katranaras, E and Auer, G and Blume, O and Giannini, V and Godor, I and Jading, Y and Olsson, M and Sabella, D and Skillermark, P and others.(2011).Energy efficiency analysis of the reference systems, areas of improvements and target breakdown.Tech. Rep. ICT-EARTH deliverable.
	\bibitem{li2014energy}
	Li, H., Song, L., \& Debbah, M. (2014). Energy efficiency of large-scale multiple antenna systems with transmit antenna selection. \textit{IEEE Transactions on Communications}, 62(2), 638-647.
	\bibitem{ngo2013energy}
	Ngo, H. Q., Larsson, E. G., \& Marzetta, T. L. (2013). Energy and spectral efficiency of very large multiuser MIMO systems. \textit{IEEE Transactions on Communications}, 61(4), 1436-1449.
	\bibitem{zhao2014performance}
	Zhao, L., Zheng, K., Long, H. \& Zhao, H.(2014). Performance analysis for downlink massive MIMO system with ZF precoding. \textit{Transactions on Emerging Telecommunications Technologies}, 25(12), 1219-1230.
	\bibitem{sifaou2016polynomial}
	Sifaou, H., Kammoun, A., Sanguinetti, L. \& Debbah, M.(2016). Polynomial expansion of the precoder for power minimization in large-scale MIMO systems. Communications (ICC), 2016 IEEE International Conference on. IEEE.
	\bibitem{van2016joint}
	Chien, V., Bj{\"o}rnson, E. \& Larsson, E. G.(2016). Joint power allocation and user association optimization for Massive MIMO systems. \textit{IEEE Transactions on Wireless Communications}, 15(9), 6384-6399.
	\bibitem{chien17pilot}
	Chien, V., Bj{\"o}rnson, E. \& Larsson, E. G.(2017). Joint Pilot Sequence Design and Power Control for Max-Min Fairness in Uplink Massive MIMO. In International Conference on Communications (ICC), 2017. IEEE.
	\bibitem{cheng17pilot}
	Cheng, H. V., Bj{\"o}rnson, E. \& Larsson, E. G.(2017). Optimal Pilot and Payload Power Control in Single-Cell Massive MIMO Systems. \textit{IEEE Transactions on Signal Processing}, 65(9), 2363-2378.
	\bibitem{guo2015energy}
	Guo, K., Guo, Y. \& Ascheid, G.(2015). Energy-efficient uplink power allocation in multi-cell MU-Massive-MIMO systems. European Wireless 2015; 21th European Wireless Conference; Proceedings of. VDE.
	\bibitem{mukherjee2014energy}
	Mukherjee, S., \& Mohammed, S. K. (2014). Energy-Spectral efficiency trade-off for a massive SU-MIMO system with transceiver power consumption. \textit{arXiv preprint arXiv:1410.5240}.
	\bibitem{bjornson2014optimal}
	Bj{\"o}rnson, E., Sanguinetti, L., Hoydis, J., \& Debbah, M. (2015). Optimal design of energy-efficient multi-user MIMO systems: Is massive MIMO the answer? \textit{IEEE Transactions on Wireless Communications}, 14(6), 3059-3075.
	\bibitem{ha2013}
	Ha, D., Lee, K. \& Kang, J.(2013). Energy Efficiency Analysis with Circuit Power Consumption in Massive MIMO Systems. In International Symposium on Personal, Indoor and Mobile Radio Communications (PIMRC), 2013. IEEE.
	\bibitem{ghrayeb2006survey} 
	Ghrayeb, A. (2006, March). A survey on antenna selection for MIMO communication systems. In Information and Communication Technologies, 2006. ICTTA'06. 2nd (Vol. 2, pp. 2104-2109). IEEE.
	\bibitem{hesami2011limiting} 
	Hesami, P., \& Laneman, J. N. (2011, March). Limiting behavior of receive antennae selection. In Information Sciences and Systems (CISS), 2011 45th Annual Conference on (pp. 1-6). IEEE.
	\bibitem{larsson2013massive}
	Larsson, E., Edfors, O., Tufvesson, F., \& Marzetta, T. (2014). Massive MIMO for next generation wireless systems. \textit{IEEE Communications Magazine}, 52(2), 186-195.
	\bibitem{fang2015raise}
	Fang, B., Qian, Z., Shao, W., \& Zhong, W. (2015). RAISE: A new fast transmit antenna selection algorithm for massive MIMO systems. \textit{Wireless Personal Communications}, 80(3), 1147-1157.
	\bibitem{molisch2005capacity}
	Molisch, A. F., Win, M. Z., Choi, Y. S., \& Winters, J. H. (2005). Capacity of MIMO systems with antenna selection. \textit{IEEE Transactions on Wireless Communications}, 4(4), 1759-1772.
	\bibitem{bai2015energy}
	Bai, Q. \& Nossek, J.A. (2015). Energy efficiency maximization for 5G multi‐antenna receivers. \textit{Transactions on Emerging Telecommunications Technologies}, 26(1), 3-14.
	\bibitem{dong2011adaptive}
	Dong, K., Prasad, N., Wang, X., \& Zhu, S. (2011). Adaptive antenna selection and Tx/Rx beamforming for large-scale MIMO systems in 60 GHz channels. \textit{EURASIP Journal on Wireless Communications and Networking}, 2011(1), 1-14.
	\bibitem{jindal2005dirty}
	Jindal, N., \& Goldsmith, A. (2005). Dirty-paper coding versus TDMA for MIMO broadcast channels. \textit{IEEE Transactions on Information Theory}, 51(5), 1783-1794.
	\bibitem{wiesel2006linear}
	Wiesel, A., Eldar, Y. C., \& Shamai, S. (2006). Linear precoding via conic optimization for fixed MIMO receivers. \textit{IEEE Transactions on Signal Processing}, 54(1), 161-176.
	\bibitem{gore2002mimo}
	Gore, D., \& Paulraj, A. J. (2002). MIMO antenna subset selection with space-time coding. \textit{IEEE Transactions on Signal Processing}, 50(10), 2580-2588.
	\bibitem{papoulis2002probability}
	Papoulis, A., \& Pillai, S. U. (2002). \textit{Probability, random variables, and stochastic processes}. Tata McGraw-Hill Education.
	\bibitem{david2003order}
	David, H. A., \& Nagaraja, H. N. (2003). \textit{Order statistics}. John Wiley \& Sons, Inc.
	\bibitem{maiwald2000calculation}
	Maiwald, D \& Kraus, D. (2000). Calculation of moments of complex Wishart and complex inverse Wishart distributed matrices. \textit{IEE Proceedings on Radar, Sonar and Navigation}, 147(4), 162-168.
	\bibitem{arnold2012relations}
	Arnold, B., \& Balakrishnan, N. (2012). \textit{Relations, bounds and approximations for order statistics (Vol. 53)}. Springer Science \& Business Media.
	\bibitem{tulino2004random}
	Tulino, A. M., \& Verd{\'u}, S. (2004). \textit{Random matrix theory and wireless communications (Vol. 1)}. Now Publishers Inc.
	\bibitem{rumney2013lte}
	Rumney, Moray and others. (2013). \textit{LTE and the evolution to 4G wireless: Design and measurement challenges}. John Wiley \& Sons.
\end{thebibliography}
\end{document}